\newtheorem{theorem}{Theorem}
\newtheorem{lemma}{Lemma}
\newtheorem{proof}{Proof}
\begin{document}

\title{Secure Full-Duplex Device-to-Device Communication}

\author{\IEEEauthorblockN{Muhammad R. A. Khandaker, Christos Masouros and Kai-Kit Wong}
\IEEEauthorblockA{Department of Electronic and Electrical Engineering\\
University College London\\
Gower Street, London, WC1E 7JE, United Kingdom\\
e-mail: $\{\rm m.khandaker, c.masouros, kai\text{-}kit.wong\}@ucl.ac.uk$}}

\pagestyle{headings}
\maketitle \thispagestyle{empty}

\begin{abstract}
This paper considers full-duplex (FD) device-to-device (D2D) communications in a downlink MISO cellular system in the presence of multiple eavesdroppers. The D2D pair communicate sharing the same frequency band allocated to the cellular users (CUs). Since the D2D users share the same frequency as the CUs, both the base station (BS) and D2D transmissions interfere each other. In addition, due to limited processing capability, D2D users are susceptible to external attacks. Our aim is to design optimal beamforming and power control mechanism to guarantee secure communication while delivering the required quality-of-service (QoS) for the D2D link. In order to improve security, artificial noise (AN) is transmitted by the BS. We design robust beamforming for secure message as well as the AN in the worst-case sense for minimizing total transmit power with imperfect channel state information (CSI) of all links available at the BS. The problem is strictly non-convex with infinitely many constraints. By discovering the hidden convexity of the problem, we derive a rank-one optimal solution for the power minimization problem.
\end{abstract}


\section{Introduction}
While traditional half-duplex (HD) communication systems operate in either frequency division duplexing or time division duplexing mode due to lack of practical devices that can transmit and receive concurrently using the same time-frequency resources, recent results in full-duplex (FD) communications have opened up new possibilities to double spectral efficiency in next-generation wireless communications \cite{jrnl_alex_miso_fd, conf_fd_relay_gc, fd_antenna_can, conf_eusipco16}. However, the major detrimental element in FD communications is the so called self-interference (SI) generated by the transmitting node's own signals. In the past, the most demotivating picture of FD communications was the fact that the power of SI can be tens of thousands times higher than that of the signal of interest. Thus the SI is solely strong enough to kill the receiver's decodability.

However, with the advent of small-cell enabled heterogeneous networks (HetNets) as well as transmission in the millimeter-wave (mmWave) frequencies, two key candidates for fifth-generation (5G) networks, wireless transmission distance is showing a sharp decreasing tendency. Thus future communication devices will transmit signals at ultra-low power allowing significant reduction in SI power. This makes concurrent transmission and reception by the same node viable in the same frequency band. Thus FD radios make communication possible that was previously deemed impossible.

Recently, FD communication has attracted affluent interest due to the development of sophisticated SI cancellation techniques. In order to facilitate FD communications in practice, several active and passive SI cancellation techniques have been proposed in both analog and digital domains \cite{conf_alex_icc16}. By combining analog and digital cancellation techniques, antenna cancellation approach proposed in \cite{fd_antenna_can} achieves the amount of self-interference cancellation required for practical full-duplex operation. The method offers up to $60$ dB SI cancellation. In \cite{fd_exp}, the authors have shown that the average amount of SI cancellation increases for active cancellation techniques with increasing received SI power. A major improvement in SI cancellation has been demonstrated in \cite{fd_real_time} with up to $73$ dB cancellation in the digital domain for a $10$ MHz OFDM signal. However, the more advanced full-duplex technique proposed in \cite{fd_1ant} uses a single antenna for simultaneous transmit-receive operation allowing $110$ dB SI cancellation.

Due to the ever increasing number of users and launching of data-demanding services, the importance of efficient use of wireless spectrum can not be overstated. Meanwhile, device-to-device (D2D) communication has turned out to be a promising technology in this regard \cite{d2d_ltea, d2d_survey}, for offloading core-network overhead in particular. In cellular systems, allowing users within close vicinity with high signal to interference and noise ratio (SINR) communicate directly with each other may save significant resources. In such scenario, the D2D users do not need to communicate through the base station (BS), the BS only needs to send the necessary control signals to these users. These users can either use unlicensed bands or share the licensed spectrum. Thus D2D communications offer numerous practically appealing benefits over traditional device-to-base station (D2BS) paradigm including higher spectral efficiency, shorter packet delays, and lower energy consumption. While D2D communication paradigm offers enormous benefits, interference management from D2D transmission to cellular users (CUs) and from cellular transmission to D2D links is crucial. Optimal power control and proper resource allocation schemes can guarantee that cellular systems have multi-dimensional performance gain.
Existing works have considered power control for D2D link to limit the interference and it's detrimental effects on overall system performance \cite{d2d_ltea, d2d_int_lim, d2d_sec_rob, d2d_sec_ss}.

While D2D communication is targeted for future small cells deployments, distance between D2D users should be relatively shorter thus allowing lower transmit power for the D2D communication links. This opens up new prospects for D2D FD communications. Considering the recent advancements in FD radio design that can offer up to $110$ dB SI cancelation \cite{fd_1ant}, along with the maximum transmit power of $20$ dBm for D2D communications, D2D is a well-suited candidate to harvest maximum yield from FD radios.

Nonetheless, D2D communications are more vulnerable to security attacks compared with the traditional D2BS counterpart \cite{conf_const_int_isit17, jrnl_2way_eh, jrnl_alex_miso_fd, d2d_sec_rob, d2d_sec_ss, d2d_sec_survey}. The main reason is the limited computational capacity of the D2D nodes to implement traditional cryptographic security measures. In this context, physical layer security is a viable solution for secure D2D communications. However, with limited processing capability and interference temperature constraints to the main cellular users (CUs), D2D systems need to be carefully designed to guarantee information secrecy such that the legitimate users can correctly decode the confidential information, while the eavesdroppers can retrieve almost nothing from their observations \cite{goel_an, jrnl_secrecy, qli_qos, jrnl_secrecy_sinr}. A jointly optimal resource sharing strategy for power control and channel pairing of CUs and D2D links has been proposed in \cite{d2d_sec_rob} for single-antenna communications. The authors in \cite{d2d_sec_ss} designed security aware power control mechanism for spectrum sharing networks. All these existing works on D2D physical layer security considered HD communications.

In this paper, we consider a downlink MISO cellular system in which a multi-antenna BS transmits a secure message to a single-antenna CU in the presence of multiple eavesdroppers and a pair of D2D communicating nodes. To enhance security, the BS transmits artificial noise (AN) signal superimposed with the secret message signal. The D2D pair communicate using full-duplex radios sharing the frequency band allocated to the CU. Since the D2D users share the same frequency as the CU, both the BS and D2D transmissions interfere each other. In particular, the licensed spectrum holder CU suffers from interference due to the unlicensed D2D transmission. Our aim is to design appropriate power control mechanism to protect the CU's secure communication while guaranteeing the required quality-of-service (QoS) for the D2D link. \textit{To the best of our knowledge, no existing work has addressed the problem for FD D2D communications.} We design robust beamforming for secure message as well as the AN in the worst-case sense with imperfect channel state information (CSI) of all links available at the transmitter. Applying semidefinite relaxation (SDR) techniques, we show that there always exists a rank-one optimal solution for the power minimization problem. Simulation results demonstrate the effectiveness of the proposed algorithm.

The rest of this paper is organized as follows. In Section~\ref{sec_sys}, the system model of MISO downlink system in presence of eavesdroppers and full-duplex D2D communicating nodes is introduced. The joint transmit beamformer and AN design algorithm is developed in Section~\ref{sec_algo_rob} with imperfect CSI of all nodes. Section~\ref{sec_sim} shows the simulation results which justify the significance of the proposed algorithms under various scenarios. Conclusions are drawn in Section~\ref{sec_con}.

\section{System Model}\label{sec_sys}
We consider the downlink of a MISO cellular system in which a multi-antenna BS equipped with $N_{\rm T}$ antennas transmits a secure message to a single-antenna CU in the presence of $K$ non-colluding eavesdroppers each having a single receiving antenna and a pair of D2D communicating nodes. The D2D pair communicate using full-duplex radios sharing the frequency band allocated to a CU carefully selected by the BS. Since the D2D users share the same frequency as the CU, BS transmission interferes the D2D users. At the same time, the CU suffers from the interference due to D2D transmission. To limit this interference, the CU may be chosen for sharing frequency based on the interference-limited-area method proposed in \cite{d2d_int_lim, fd_d2d}.
The BS performs transmit beamforming to send secret information to the CU creating minimal interference to the D2D nodes.

For ease of exposition, we name the BS, the legitimate CU, and the eavesdroppers {\it Alice}, {\it Bob}, and {\it Eves}, respectively, while the D2D pair are named {\it David} and {\it Dora} for short. In order to confuse Eves more effectively, it is assumed that Alice transmits artificially generated noise signals superimposed onto the message signal.

Let us now denote ${\bf x}_{\rm s}\in \mathbb C^{N_{\rm T} \times 1}$, ${\bf x}_{\rm n}\in \mathbb C^{N_{\rm T} \times 1}$ as the confidential message signal vector and the AN vector, respectively, and ${\bf h}_{\rm b}\in \mathbb C^{N_{\rm T} \times 1}$, ${\bf h}_{\rm d}\in \mathbb C^{N_{\rm T} \times 1}$, ${\bf h}_{\rm r}\in \mathbb C^{N_{\rm T} \times 1}$, and ${\bf h}_{{\rm e},k}\in \mathbb C^{N_{\rm T} \times 1}$ as the conjugated complex channel vectors between Alice and Bob, David, Dora, and  the $k$th Eve, respectively. The complex channel responses from David and Dora to Bob and $k$th Eve are, respectively, given by $g_{\rm bd}\in \mathbb C, ~~ g_{{\rm de},k}\in \mathbb C$ and $g_{\rm br}\in \mathbb C, ~~ g_{{\rm re},k}\in \mathbb C$. In the aforementioned system model, the received signals at Bob, David, Dora, and $k$th Eve are, respectively, given by
\begin{align}
y_{\rm b}&={\bf h}_{\rm b}^H\left({\bf x}_{\rm s} + {\bf x}_{\rm n}\right) + g_{\rm bd}x_{\rm d} + g_{\rm br}x_{\rm r} + n_{\rm b},\\
y_{\rm d}&={\bf h}_{\rm d}^H\left({\bf x}_{\rm s} + {\bf x}_{\rm n}\right) + g_{\rm dd}x_{\rm d} + g_{\rm dr}x_{\rm r} + n_{\rm d},\\
y_{\rm r}&={\bf h}_{\rm r}^H\left({\bf x}_{\rm s} + {\bf x}_{\rm n}\right) + g_{\rm dr}x_{\rm d} + g_{\rm rr}x_{\rm r} + n_{\rm r},\\
y_{{\rm e},k}&={\bf h}_{{\rm e},k}^H\left({\bf x}_{\rm s} + {\bf x}_{\rm n}\right) + g_{{\rm de},k}x_d + g_{{\rm re},k}x_{\rm r} + n_{{\rm e},k}, \nonumber\\
&\qquad\qquad\qquad~\mbox{for }k=1,\dots, K,
\end{align}
where $g_{\rm dd}\in \mathbb C$ and $g_{\rm rr}\in \mathbb C$ are the self-interfering loop-back channel responses between the transmit and receiving circuitry at David and Dora, respectively, $x_{\rm d}\sim\mathcal{CN}(0,p_{\rm d})$, $x_{\rm r}\sim\mathcal{CN}(0,p_{\rm r})$ are the transmit signals of David and Dora, respectively, $n_{\rm b}\sim\mathcal{CN}(0,\sigma_{\rm b}^2)$, $n_{\rm d}\sim\mathcal{CN}(0,\sigma_{\rm d}^2)$, $n_{\rm r}\sim\mathcal{CN}(0,\sigma_{\rm r}^2)$, and $n_{{\rm e},k}\sim\mathcal{CN}(0,\sigma_{{\rm e},k}^2)$ are the additive Gaussian noises at Bob, David, Dora, and $k$th Eve, respectively. We also assume that the information and AN signals follow distribution ${\bf x}_{\rm s} \sim \mathcal{CN} ({\bf 0},{\bf Q}_{\rm s})$ and ${\bf x}_{\rm n} \sim \mathcal{CN} ({\bf 0},{\bf Q}_{\rm n})$, respectively, where ${\bf Q}_{\rm s}$ is the $N_{\rm T} \times {N_{\rm T}}$ transmit covariance matrix and ${\bf Q}_{\rm n}$ is the $N_{\rm T} \times {N_{\rm T}}$ AN covariance matrix.

Based on the above signal model, the SINR at Bob, David, Dora, and $k$th Eve are, respectively, given by
\begin{align}
{\Gamma}_{\rm b} &= \frac{{\bf h}_{\rm b}^H{\bf Q}_{\rm s}{\bf h}_{\rm b}}{{\bf h}_{\rm b}^H{\bf Q}_{\rm n}{\bf h}_{\rm b} + p_{\rm d}|g_{\rm bd}|^2 + p_{\rm r}|g_{\rm br}|^2 + \sigma_{\rm b}^2}, \label{gmm_b}\\
{\Gamma}_{\rm d} &= \frac{p_{\rm r}|g_{\rm dr}|^2}{{\bf h}_{\rm d}^H\left({\bf Q}_{\rm s}+{\bf Q}_{\rm n}\right){\bf h}_{\rm d} + \rho_{\rm d} +  \sigma_{\rm d}^2},\label{gmm_d}\\
{\Gamma}_{\rm r} &= \frac{p_{\rm d}|g_{\rm dr}|^2}{{\bf h}_{\rm r}^H\left({\bf Q}_{\rm s}+{\bf Q}_{\rm n}\right){\bf h}_{\rm r} + \rho_{\rm r} + \sigma_{\rm r}^2},\label{gmm_r}\\
{\Gamma}_{{\rm e},k} &= \frac{{\bf h}_{{\rm e},k}^H{\bf Q}_{\rm s}{\bf h}_{{\rm e},k}}{{\bf h}_{{\rm e},k}^H{\bf Q}_{\rm n}{\bf h}_{{\rm e},k} + p_{\rm d}|g_{{\rm de},k}|^2 + p_{\rm r}|g_{{\rm re},k}|^2 + \sigma_{{\rm e},k}^2}, \forall k. \label{gmm_ek}
\end{align}
Note that $\rho_{\rm d}$ and $\rho_{\rm r}$ in the SINR expressions for David and Dora in \eqref{gmm_d} and \eqref{gmm_r}, respectively, indicate the residual self-interference, which can be written as $\rho_{\rm d} \sim\mathcal{CN}(0,\sigma_{\rm s,x}^2), ~ {\rm x} \in \{\rm d, r\}$ and $\sigma_{\rm s,x}^2$ can be defined as $\sigma_{\rm s,x}^2 = \alpha_{\rm x}p_{\rm x}$, where $\alpha_{\rm x}$ depends on the amount of SI cancellation at node ${\rm x}$.

In the following, we aim at developing secure precoding schemes under power and interference control criteria in order to ensure maximum network throughput in the D2D communication framework.

\section{Robust Secrecy Beamforming}\label{sec_algo_rob}
In recent studies, it has been demonstrated that the presence of D2D communication links increases overall system throughput if appropriate resource allocation and power control mechanisms are applied \cite{d2d_int_lim, fd_d2d}.

While most of the existing works on secure D2D communications assume perfect CSI, the assumption is not always practical due to the time-varying nature of wireless communication channels. In many practical scenarios, it is often difficult to obtain any information about the eavesdroppers' CSI, or it may even be impractical to assume that an eavesdropper is present at all. Hence in this section, we develop a robust power minimization algorithm considering the worst-case design. In particular, we assume that the actual channels ${\bf h}_{\rm b}$, ${\bf h}_{\rm d}$, ${\bf h}_{\rm r}$, and ${\bf h}_{{\rm e},k}$ lie in the neighbourhood of the estimated channels $\hat{{\bf h}}_{\rm b}$, $\hat{{\bf h}}_{\rm d}$, $\hat{{\bf h}}_{\rm r}$, and $\hat{\bf h}_{{\rm e},k}$, respectively, available at the BS. Hence, the actual channels are modeled as
\begin{eqnarray}
{\bf h}_{\rm b} \!\!\!&=&\!\!\! \hat{{\bf h}}_{\rm b} + {\boldsymbol\delta}_{\rm b}, ~~ 
{\bf h}_{\rm d} = \hat{{\bf h}}_{\rm d} + {\boldsymbol\delta}_{\rm d}\\
{\bf h}_{\rm r} \!\!\!&=&\!\!\! \hat{{\bf h}}_{\rm r} + {\boldsymbol\delta}_{\rm r}, ~~
{\bf h}_{{\rm e},k} = \hat{\bf h}_{{\rm e},k} + {\boldsymbol\delta}_{{\rm e},k}, \forall k,
\end{eqnarray}
in which ${\boldsymbol\delta}_{\rm b}$, ${\boldsymbol\delta}_{\rm d}$, ${\boldsymbol\delta}_{\rm r}$, and ${\boldsymbol\delta}_{{\rm e},k}, \forall k,$ represent the channel uncertainties, which are assumed to be bounded such that
\begin{align}
\|{\boldsymbol\delta}_{\rm b}\|_2 &= \|{\bf h}_{\rm b} - \hat{{\bf h}}_{\rm b}\|_2 \leq \varepsilon_{\rm b}, \mbox{for some }\varepsilon_{\rm b} \geq 0,\\
\|{\boldsymbol\delta}_{\rm d}\|_2 &= \|{\bf h}_{\rm d} - \hat{{\bf h}}_{\rm d}\|_2 \leq \varepsilon_{\rm d}, \mbox{for some }\varepsilon_{\rm d} \geq 0,\\
\|{\boldsymbol\delta}_{\rm r}\|_2 &= \|{\bf h}_{\rm r} - \hat{{\bf h}}_{\rm r}\|_2 \leq \varepsilon_{\rm r}, \mbox{for some }\varepsilon_{\rm r} \geq 0,\\
\|{\boldsymbol\delta}_{{\rm e},k}\|_2 &= \|{\bf h}_{{\rm e},k} - \hat{\bf h}_{{\rm e},k}\|_2 \leq \varepsilon_{{\rm e},k}, \mbox{for some }\varepsilon_{{\rm e},k} \geq 0.
\end{align}
Similarly, the channel coefficients involving the D2D nodes are defined as
\begin{align}
g_{\rm bd} & = \hat g_{\rm bd} + \delta_{\rm bd} ~~\mbox{with}~~ |\delta_{\rm bd}| \le \varepsilon_{\rm bd},\\
g_{\rm br} & = \hat g_{\rm br} + \delta_{\rm br} ~~\mbox{with}~~ |\delta_{\rm br}| \le \varepsilon_{\rm br},\\
g_{\rm dr} & = \hat g_{\rm dr} + \delta_{\rm dr} ~~\mbox{with}~~ |\delta_{\rm dr}| \le \varepsilon_{\rm dr},\\
g_{{\rm de},k} & = \hat g_{{\rm de},k} + \delta_{{\rm de},k} ~~\mbox{with}~~ |\delta_{{\rm de},k}| \le \varepsilon_{{\rm de},k}.
\end{align}
As such, the robust power minimization problem under QoS guarantee is formulated as
\begin{subequations}\label{mxP2}
\begin{eqnarray}
\min_{{\bf Q}_{\rm s},{\bf Q}_{\rm n}\succeq {\bf 0}\atop p_{\rm d}, p_{\rm r} \ge 0} \!\!\!& &\!\!\! {\rm tr}\left({\bf Q}_{\rm s} + {\bf Q}_{\rm n}\right) \label{mxP2_o}\\
{\rm s.t.} \!\!\!& &\!\!\! \min_{\|{\boldsymbol\delta}_{\rm b}\|\leq\varepsilon_{\rm b}} ~~ {\Gamma}_{\rm b} \geq \gamma_{\rm b} \label{mxP2_c0}\\
\!\!\!& &\!\!\! \min_{\|{\boldsymbol\delta}_{\rm d}\|\leq\varepsilon_{\rm d}} ~~ {\Gamma}_{\rm d} \geq \gamma_{\rm d} \label{mxP2_c1}\\
\!\!\!& &\!\!\! \min_{\|{\boldsymbol\delta}_{\rm r}\|\leq\varepsilon_{\rm r}} ~~ {\Gamma}_{\rm r} \geq \gamma_{\rm r} \label{mxP2_c2}\\
\!\!\!& &\!\!\! \max_{\|{\boldsymbol\delta}_{{\rm e},k}\|\leq\varepsilon_{{\rm e},k}} ~~ {\Gamma}_{{\rm e},k} \leq \gamma_{{\rm e},k}, \forall k, \label{mxP2_c3}\\
\!\!\!& &\!\!\! 0 \le p_{\rm d}, p_{\rm r} \le P_{\rm max}.\label{mxP2_c4}
\end{eqnarray}
\end{subequations}
Note that the constraint \eqref{mxP2_c0} guarantees interference control to Bob from the D2D link and the constraint \eqref{mxP2_c1}--\eqref{mxP2_c2} controls interference from the BS transmission to the D2D link, while the constraint \eqref{mxP2_c4} implements power control for the D2D link with $P_{\rm max}$ indicating the maximum allowable transmit power for the D2D nodes. In constraints \eqref{mxP2_c0}--\eqref{mxP2_c3}, there are infinitely many inequalities due to the channel uncertainties, which make the worst-case design particularly challenging. It is guaranteed by \eqref{mxP2_c0}--\eqref{mxP2_c3} that the constraints are satisfied for all realizations of the channel error terms related to the BS to users as well as D2D channels. As such, statistical information about the channel error vectors is not required in this approach, and the knowledge of the upper-bound of the error norms is sufficient.

\textbf{Remark:} Note that when the optimal solution of problem \eqref{mxP2} satisfies the condition ${\rm rank}\big({\bf Q}_{\rm s}\big) \le 1$, transmit beamforming is the optimal strategy for Alice and the transmit power required for the secret message transmission will be minimum. The implementation complexity of corresponding solution is significantly lower. We will prove the existence of a rank-one solution later in this paper. However, we omit the rank constraint for the moment for convenience.

For the worst-case based design in \eqref{mxP2}, the D2D channel gains are upper-(or, lower-)bounded using triangle inequality properties \cite{mat_ana, jrnl_swipt}:
\begin{subequations}\label{tri_ineq}
\begin{align}
|x + y|^2 & \le (|x| + |y|)^2 = |x|^2 + |y|^2 + 2|x|.|y| \label{tri_ineq_pl}\\
|x + y|^2 & \ge (|x| - |y|)^2 = |x|^2 + |y|^2 - 2|x|.|y| \label{tri_ineq_mn}
\end{align}
\end{subequations}
Applying \eqref{tri_ineq}, we have
\begin{subequations}
\begin{align}
\tilde g_{\rm bd} \triangleq \max_{|\delta_{\rm bd}| \le \varepsilon_{\rm bd}} |g_{\rm bd}|^2 &= |\hat g_{\rm bd} + \delta_{\rm bd}|^2 \nonumber\\
\le & |\hat g_{\rm bd}|^2 + \varepsilon_{\rm bd}^2 + 2\varepsilon_{\rm bd}|\hat g_{\rm bd}|\\
\tilde g_{\rm br} \triangleq \max_{|\delta_{\rm br}| \le \varepsilon_{\rm br}} |g_{\rm br}|^2 &= |\hat g_{\rm br} + \delta_{\rm br}|^2 \nonumber\\
\le & |\hat g_{\rm br}|^2 + \varepsilon_{\rm br}^2 + 2\varepsilon_{\rm br}|\hat g_{\rm br}|\\
\tilde g_{\rm dr} \triangleq \min_{|\delta_{\rm dr}| \le \varepsilon_{\rm dr}} |g_{\rm dr}|^2 &= |\hat g_{\rm dr} + \delta_{\rm dr}|^2 \nonumber\\
\le & |\hat g_{\rm dr}|^2 + \varepsilon_{\rm dr}^2 - 2\varepsilon_{\rm dr}|\hat g_{\rm dr}|\\
\tilde g_{{\rm de},k} \triangleq \min_{|\delta_{{\rm de},k}| \le \varepsilon_{{\rm de},k}} & |g_{{\rm de},k}|^2 = |\hat g_{{\rm de},k} + \delta_{{\rm de},k}|^2 \nonumber\\
\le & |\hat g_{{\rm de},k}|^2 + \varepsilon_{{\rm de},k}^2 - 2\varepsilon_{{\rm de},k}|\hat g_{{\rm de},k}|\\
\tilde g_{{\rm re},k} \triangleq \min_{|\delta_{{\rm re},k}| \le \varepsilon_{{\rm re},k}} & |g_{{\rm re},k}|^2 = |\hat g_{{\rm re},k} + \delta_{{\rm re},k}|^2 \nonumber\\
\le & |\hat g_{{\rm re},k}|^2 + \varepsilon_{{\rm re},k}^2 - 2\varepsilon_{{\rm re},k}|\hat g_{{\rm re},k}|.
\end{align}
\end{subequations}
To make \eqref{mxP2} more tractable, we transform the infinitely many inequality constraints \eqref{mxP2_c0}--\eqref{mxP2_c3} into finite linear matrix inequalities (LMIs) by applying ${\mathcal S}$-procedure \cite{boyd}. For completeness, the ${\mathcal S}$-procedure is presented in {\em Lemma \ref{lma1}} below.
\begin{lemma}[\bfseries$\boldsymbol{\mathcal S}$-Procedure] \label{lma1}
Let $f_{i}({\bf x}), i = 1, 2,$ be defined as $$f_{i}({\bf x})={\bf x}^{H}{\bf A}_{i}{\bf x}+2{\Re}\left\{{\bf b}_{i}^{H}{\bf x}\right\}+c_{i}$$
where ${{\bf A}_{i} \in {\mathcal C}^{n\times n}}, {\bf b}_{i} \in {\mathcal C}^{n}, {c}_{i} \in {\mathcal R}$. The implication $f_{1}({\bf x}) \leq 0 \Rightarrow f_{2}({\bf x}) \leq 0$ holds if and only if there exists $\mu \geq 0$ s.t.
$$\mu\left[\begin{array}{cc}{\bf A}_{1}&{\bf b}_{1}\\{\bf b}_{1}^{H}&c_{1}\end{array}\right]-\left[\begin{array}{cc}{\bf A}_{2}&{\bf b}_{2}\cr{\bf b}_{2}^{H}&c_{2}\end{array}\right]\succeq{\bf 0}$$
provided that there exists a point $\hat{{\bf x}}$ such that $f_1(\hat{{\bf x}})<0$.
\end{lemma}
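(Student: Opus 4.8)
The plan is to prove the two directions of the equivalence separately, handling sufficiency as a direct verification and reserving the substance for necessity. Throughout I would collect the data of each $f_i$ into the augmented Hermitian matrix and lift $\mathbf{x}$ by one coordinate,
\[
\mathbf{M}_i = \begin{bmatrix} \mathbf{A}_i & \mathbf{b}_i \\ \mathbf{b}_i^H & c_i \end{bmatrix}, \qquad \tilde{\mathbf{x}} = \begin{bmatrix} \mathbf{x} \\ 1 \end{bmatrix},
\]
so that a one-line expansion gives $\tilde{\mathbf{x}}^H \mathbf{M}_i \tilde{\mathbf{x}} = \mathbf{x}^H\mathbf{A}_i\mathbf{x} + 2\Re\{\mathbf{b}_i^H\mathbf{x}\} + c_i = f_i(\mathbf{x})$. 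For the sufficiency direction, suppose some $\mu \ge 0$ satisfies $\mu\mathbf{M}_1 - \mathbf{M}_2 \succeq \mathbf{0}$. Then $\mu f_1(\mathbf{x}) - f_2(\mathbf{x}) = \tilde{\mathbf{x}}^H(\mu\mathbf{M}_1-\mathbf{M}_2)\tilde{\mathbf{x}} \ge 0$ for every $\mathbf{x}$, so $f_1(\mathbf{x}) \le 0$ forces $f_2(\mathbf{x}) \le \mu f_1(\mathbf{x}) \le 0$. The Slater point plays no role in this direction.

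The necessity direction is where the content lies. The hypothesis states that the image set $\mathcal{R} = \{(f_1(\mathbf{x}), f_2(\mathbf{x})) : \mathbf{x} \in \mathbb{C}^n\} \subseteq \mathbb{R}^2$ never meets the convex region $\mathcal{O} = \{(u,v) : u \le 0,\ v > 0\}$. The decisive structural fact, which I would isolate as the \emph{hidden convexity}, is that $\mathcal{R}$ is itself convex. I would establish this from the convexity of the joint numerical range of two Hermitian forms (the complex Toeplitz--Hausdorff phenomenon): the pair $(\tilde{\mathbf{x}}^H\mathbf{M}_1\tilde{\mathbf{x}},\, \tilde{\mathbf{x}}^H\mathbf{M}_2\tilde{\mathbf{x}})$ sweeps out a convex cone as $\tilde{\mathbf{x}}$ ranges over $\mathbb{C}^{n+1}$, and $\mathcal{R}$ is its slice at last coordinate $1$. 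The reliance on complex vectors is essential here: over $\mathbb{C}$ the joint range of two Hermitian forms is convex with no extra hypotheses, whereas the real analogue requires the more delicate Dines argument.

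Given convexity, I would separate. Since $\mathcal{R}$ and the interior-nonempty convex set $\mathcal{O}$ are disjoint, the separating hyperplane theorem yields $(\lambda_1,\lambda_2) \neq \mathbf{0}$ with $\lambda_1 u + \lambda_2 v \le 0$ throughout $\mathcal{O}$ and $\lambda_1 u + \lambda_2 v \ge 0$ throughout $\mathcal{R}$. Sending $u \to -\infty$ and then $v \to +\infty$ inside $\mathcal{O}$ pins down the signs $\lambda_1 \ge 0$ and $\lambda_2 \le 0$. The Slater point $\hat{\mathbf{x}}$, whose image has strictly negative first coordinate, rules out the degenerate case $\lambda_2 = 0$: were $\lambda_2 = 0$, the $\mathcal{R}$-inequality at $\hat{\mathbf{x}}$ would read $\lambda_1 f_1(\hat{\mathbf{x}}) \ge 0$ with $f_1(\hat{\mathbf{x}}) < 0$, forcing $\lambda_1 = 0$ and contradicting $(\lambda_1,\lambda_2)\neq\mathbf{0}$. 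Hence $\lambda_2 < 0$, and $\mu := -\lambda_1/\lambda_2 \ge 0$ satisfies $\mu f_1(\mathbf{x}) - f_2(\mathbf{x}) \ge 0$ for all $\mathbf{x} \in \mathbb{C}^n$.

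It remains to promote this scalar inequality to the matrix inequality $\mu\mathbf{M}_1 - \mathbf{M}_2 \succeq \mathbf{0}$. Written as $\tilde{\mathbf{x}}^H(\mu\mathbf{M}_1-\mathbf{M}_2)\tilde{\mathbf{x}} \ge 0$, it already covers all augmented vectors with last entry $1$; homogeneity extends it to any nonzero last entry by scaling, and a limiting argument (replace $\mathbf{x}$ by $s\mathbf{x}$, divide by $s^2$, and let $s\to\infty$ so that the pure quadratic part $\mathbf{y}^H(\mu\mathbf{A}_1-\mathbf{A}_2)\mathbf{y}$ dominates) covers last entry $0$, yielding positive semidefiniteness on all of $\mathbb{C}^{n+1}$. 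The step I expect to be the main obstacle is the hidden-convexity claim: verifying that $\mathcal{R}$ is convex is the nontrivial kernel of the whole argument, while the separation, the sign analysis, and the use of the Slater condition are comparatively routine once convexity is secured.
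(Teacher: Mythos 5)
First, a point of reference: the paper itself does not prove this lemma at all --- it is quoted ``for completeness'' with a citation to Boyd and Vandenberghe --- so your attempt must be judged on its own merits rather than against an in-paper argument. Most of your structure is sound: the sufficiency direction, the separation and sign analysis, the role of the Slater point in excluding $\lambda_2=0$, and the final promotion of the scalar inequality $\mu f_1(\mathbf{x})-f_2(\mathbf{x})\ge 0$ to $\mu\mathbf{M}_1-\mathbf{M}_2\succeq\mathbf{0}$ (scaling for augmented vectors with nonzero last entry, leading-coefficient limit for last entry zero) are all correct.

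The genuine gap is in the justification of the very step you flag as the kernel: convexity of $\mathcal{R}=\{(f_1(\mathbf{x}),f_2(\mathbf{x})):\mathbf{x}\in\mathbb{C}^n\}$. You derive it from convexity of the cone $K=\{(\tilde{\mathbf{x}}^H\mathbf{M}_1\tilde{\mathbf{x}},\,\tilde{\mathbf{x}}^H\mathbf{M}_2\tilde{\mathbf{x}}):\tilde{\mathbf{x}}\in\mathbb{C}^{n+1}\}$ by calling $\mathcal{R}$ ``its slice at last coordinate $1$.'' But the constraint ``last coordinate equal to $1$'' lives in the \emph{domain} $\mathbb{C}^{n+1}$, whereas $K$ lives in $\mathbb{R}^2$; the image of a slice of the domain is not a slice of the image, and convexity of a cone says nothing about convexity of a set generating it (the cone over the unit circle is all of $\mathbb{R}^2$, yet the circle is not convex). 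Concretely, $K$ is the union of the positive scalings of $\mathcal{R}$ with the homogeneous image $\{(\mathbf{y}^H\mathbf{A}_1\mathbf{y},\mathbf{y}^H\mathbf{A}_2\mathbf{y})\}$, and one cannot recover $\mathcal{R}$ from $K$. To make the slicing idea rigorous you must carry the homogenizing coordinate into the \emph{codomain}: adjoin the third Hermitian form $g_3(\tilde{\mathbf{x}})=|\tilde{x}_{n+1}|^2$, invoke convexity of the joint conical range of \emph{three} Hermitian forms over $\mathbb{C}^{n+1}$ (the Fradkov--Yakubovich/Polyak theorem, a strictly stronger fact than Toeplitz--Hausdorff for two forms, and exactly what distinguishes the complex from the real case), and then slice at $g_3=1$. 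Alternatively, drop the claim that $\mathcal{R}$ is convex and follow the standard route: first extend the hypothesis to the homogenized implication $\tilde{\mathbf{x}}^H\mathbf{M}_1\tilde{\mathbf{x}}\le 0\Rightarrow\tilde{\mathbf{x}}^H\mathbf{M}_2\tilde{\mathbf{x}}\le 0$ on all of $\mathbb{C}^{n+1}$ --- the degenerate directions with last entry $0$ require their own phase/limiting argument using the Slater point --- and only then separate the two-form cone $K$ from $\{(u,v):u\le 0,\,v>0\}$. As written, your central convexity claim rests on an invalid reduction, so the proof is incomplete even though the claim itself is true.
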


According to ${\mathcal S}$-procedure, if there exists $\mu_{\rm b}  \geq 0$, $\mu_{\rm d}  \geq 0$, $\mu_{\rm r}  \geq 0$, $\mu_{{\rm e},k} \geq 0, \forall k,$ we can transform the constraints \eqref{mxP2_c0}--\eqref{mxP2_c2} into the finite set of LMIs in \eqref{rob_lmis} (top of the next page), where $c_{{\rm e},k} \triangleq \gamma_{{\rm e},k}\left(p_{\rm d}\tilde g_{{\rm de},k} + p_{\rm r}\tilde g_{{\rm re},k} + \sigma_{{\rm e},k}^2\right) - \mu_{{\rm e},k}\varepsilon_k^2$.
\begin{figure*}
\begin{subequations}\label{rob_lmis}
\begin{eqnarray}
\bar{\boldsymbol\Gamma}_{\rm b}\left({\bf Q}_{\rm s},{\bf Q}_{\rm n},\mu_{\rm b}\right) \!\!\!&\triangleq&\!\!\! \left[\begin{array}{cc}\mu_{\rm b}{\bf I}_{N_{\rm T}} + {\bf Q}_{\rm s} - \gamma_{\rm b}{\bf Q}_{\rm n}&\left({\bf Q}_{\rm s} - \gamma_{\rm b}{\bf Q}_{\rm n}\right)\hat{{\bf h}}_{\rm b}\\
\hat{{\bf h}}_{\rm b}^H\left({\bf Q}_{\rm s} - \gamma_{\rm b}{\bf Q}_{\rm n}\right)&\hat{{\bf h}}_{\rm b}^H\left({\bf Q}_{\rm s} - \gamma_{\rm b}{\bf Q}_{\rm n}\right)\hat{{\bf h}}_{\rm b} - \gamma_{\rm b}\left(p_{\rm d}\tilde g_{\rm bd} + p_{\rm r}\tilde g_{\rm br} + \sigma_{\rm b}^2\right) - \mu_{\rm b}\varepsilon_{\rm b}^2\end{array}\right]\succeq{\bf 0}, \label{sinrB_lmi}\\
\bar{\boldsymbol\Gamma}_{\rm d}\left({\bf Q}_{\rm s},{\bf Q}_{\rm n},\mu_{\rm d}\right) \!\!\!&\triangleq&\!\!\! \left[\begin{array}{cc}\mu_{\rm d}{\bf I}_{N_{\rm T}} - \gamma_{\rm d} \left({\bf Q}_{\rm s} + {\bf Q}_{\rm n}\right) & - \gamma_{\rm d} \left({\bf Q}_{\rm s} + {\bf Q}_{\rm n}\right)\hat{{\bf h}}_{\rm d}\\
- \gamma_{\rm d} \hat{{\bf h}}_{\rm d}^H\left({\bf Q}_{\rm s} + {\bf Q}_{\rm n}\right) & -\gamma_{\rm d} \hat{{\bf h}}_{\rm d}^H\left({\bf Q}_{\rm s} + {\bf Q}_{\rm n}\right)\hat{{\bf h}}_{\rm d} + p_{\rm r}\tilde g_{\rm dr} - \gamma_{\rm d}\left(\rho_{\rm d} + \sigma_{\rm d}^2\right) - \mu_{\rm d}\varepsilon_{\rm d}^2\end{array}\right]\succeq{\bf 0}, \label{sinrD_lmi}\\
\bar{\boldsymbol\Gamma}_{\rm r}\left({\bf Q}_{\rm s},{\bf Q}_{\rm n},\mu_{\rm r}\right) \!\!\!&\triangleq&\!\!\! \left[\begin{array}{cc}\mu_{\rm r}{\bf I}_{N_{\rm T}} - \gamma_{\rm r} \left({\bf Q}_{\rm s} + {\bf Q}_{\rm n}\right) & - \gamma_{\rm r} \left({\bf Q}_{\rm s} + {\bf Q}_{\rm n}\right)\hat{{\bf h}}_{\rm r}\\
- \gamma_{\rm r} \hat{{\bf h}}_{\rm r}^H\left({\bf Q}_{\rm s} + {\bf Q}_{\rm n}\right) & -\gamma_{\rm r} \hat{{\bf h}}_{\rm r}^H\left({\bf Q}_{\rm s} + {\bf Q}_{\rm n}\right)\hat{{\bf h}}_{\rm r} + p_{\rm d}\tilde g_{\rm dr} - \gamma_{\rm r}\left(\rho_{\rm r} + \sigma_{\rm r}^2\right) - \mu_{\rm r}\varepsilon_{\rm r}^2\end{array}\right]\succeq{\bf 0}, \label{sinrR_lmi}\\
\bar{\boldsymbol\Gamma}_{{\rm e},k}\left({\bf Q}_{\rm s},{\bf Q}_{\rm n},\mu_{{\rm e},k}\right) \!\!\!&\triangleq& \!\!\! \left[ \begin{array}{cc}\mu_{{\rm e},k}{\bf I}_{N_{\rm T}} + \gamma_{{\rm e},k} {\bf Q}_{\rm n} - {\bf Q}_{\rm s}&\left(\gamma_{{\rm e},k}{\bf Q}_{\rm n}-{\bf Q}_{\rm s}\right)\hat{{\bf h}}_{{\rm e},k}\\
 \hat{{\bf h}}_{{\rm e},k}^H\left(\gamma_{{\rm e},k}{\bf Q}_{\rm n}-{\bf Q}_{\rm s}\right) &  \hat{{\bf h}}_{{\rm e},k}^H\left(\gamma_{{\rm e},k}{\bf Q}_{\rm n} - {\bf Q}_{\rm s}\right)\hat{{\bf h}}_{{\rm e},k} + c_{{\rm e},k}\end{array} \right]  \succeq {\bf 0}, \label{sinrE_lmi}
\end{eqnarray}
\end{subequations}
\hrulefill \normalsize
\end{figure*}
Substituting the above results into problem \eqref{mxP2}, we can equivalently reformulate the problem as
\begin{subequations}\label{mxP3}
\begin{eqnarray}
\min_{{\bf Q}_{\rm s},{\bf Q}_{\rm n}\succeq {\bf 0}, p_{\rm d}, p_{\rm r} \atop \mu_{\rm b},\mu_{\rm d},\mu_{\rm r},\{\mu_{{\rm e},k}\}} \!\!\!& &\!\!\! {\rm tr}\left({\bf Q}_{\rm s} + {\bf Q}_{\rm n}\right) \label{mxP3_o}\\
{\rm s.t.} \!\!\!& &\!\!\! \bar{\boldsymbol\Gamma}_{\rm b}\left({\bf Q}_{\rm s},{\bf Q}_{\rm n},\mu_{\rm b}\right) \succeq {\bf 0}\label{mxP3_c1}\\
\!\!\!& &\!\!\! \bar{\boldsymbol\Gamma}_{\rm d}\left({\bf Q}_{\rm s},{\bf Q}_{\rm n},\mu_{\rm d}\right) \succeq {\bf 0}\label{mxP3_c2}\\\!\!\!& &\!\!\! \bar{\boldsymbol\Gamma}_{\rm r}\left({\bf Q}_{\rm s},{\bf Q}_{\rm n},\mu_{\rm r}\right) \succeq {\bf 0}\label{mxP3_c3}\\
\!\!\!& &\!\!\! \bar{\boldsymbol\Gamma}_{{\rm e},k}\left({\bf Q}_{\rm s},{\bf Q}_{\rm n},\mu_{{\rm e},k}\right) \succeq{\bf 0}, \forall k,\label{mxP3_c4}\\
\!\!\!& &\!\!\! 0 \le p_{\rm d}, p_{\rm r} \le P_{\rm max}. \label{mxP3_c5}
\end{eqnarray}
\end{subequations}
Problem \eqref{mxP3} is a standard SDP problem, which is convex, and can be optimally solved via off-the-shelf interior-point based solvers \cite{cvx}. Apparently, the solution to the problem \eqref{mxP3} may not in general seem to be optimal to the original problem \eqref{mxP2} due to the rank relaxation considered earlier. However, by discovering the hidden convexity in the problem \eqref{mxP3}, we guarantee the optimality of the proposed solution as described in the following theorem.

\begin{theorem}\label{thm_rank}
Suppose that the SDP problem \eqref{mxP3} is feasible. There always exists an optimal solution $({\bf Q}_{\rm s},{\bf Q}_{\rm n})$ to the problem \eqref{mxP3} such that ${\rm rank}\big({\bf Q}_{\rm s}\big) = 1$.
\end{theorem}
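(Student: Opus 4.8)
The plan is to invoke Lagrange duality for the convex SDP \eqref{mxP3} and to read the rank of ${\bf Q}_{\rm s}$ directly off the KKT conditions. By assumption \eqref{mxP3} is feasible, and since the semidefinite cone has nonempty interior and the $\mathcal S$-procedure multipliers can be taken arbitrarily large, Slater's condition holds; strong duality therefore applies and the KKT system is necessary and sufficient for optimality. I would attach positive-semidefinite dual matrices ${\bf Z}_{\rm b},{\bf Z}_{\rm d},{\bf Z}_{\rm r},{\bf Z}_{{\rm e},k}\succeq{\bf 0}$ to the LMIs \eqref{mxP3_c1}--\eqref{mxP3_c4}, and ${\bf Y}_{\rm s},{\bf Y}_{\rm n}\succeq{\bf 0}$ to the constraints ${\bf Q}_{\rm s}\succeq{\bf 0}$, ${\bf Q}_{\rm n}\succeq{\bf 0}$, and then form the Lagrangian $L$.

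Because ${\bf Q}_{\rm s}$ enters $\bar{\boldsymbol\Gamma}_{\rm b}$ only as $[{\bf I}_{N_{\rm T}}~\hat{\bf h}_{\rm b}]^H{\bf Q}_{\rm s}[{\bf I}_{N_{\rm T}}~\hat{\bf h}_{\rm b}]$ (and similarly in the other LMIs), defining $\hat{\bf Z}_{\rm b}\triangleq[{\bf I}_{N_{\rm T}}~\hat{\bf h}_{\rm b}]{\bf Z}_{\rm b}[{\bf I}_{N_{\rm T}}~\hat{\bf h}_{\rm b}]^H\succeq{\bf 0}$ (and $\hat{\bf Z}_{\rm d},\hat{\bf Z}_{\rm r},\hat{\bf Z}_{{\rm e},k}$ analogously), the stationarity of $L$ with respect to ${\bf Q}_{\rm s}$ reduces to
$${\bf Y}_{\rm s}={\bf A}-\hat{\bf Z}_{\rm b},\quad {\bf A}\triangleq{\bf I}+\gamma_{\rm d}\hat{\bf Z}_{\rm d}+\gamma_{\rm r}\hat{\bf Z}_{\rm r}+\sum_k\hat{\bf Z}_{{\rm e},k}\succeq{\bf I}\succ{\bf 0}.$$
Complementary slackness ${\bf Y}_{\rm s}{\bf Q}_{\rm s}={\bf 0}$ then gives ${\bf A}{\bf Q}_{\rm s}=\hat{\bf Z}_{\rm b}{\bf Q}_{\rm s}$, and because ${\bf A}$ is invertible, ${\rm rank}({\bf Q}_{\rm s})={\rm rank}({\bf A}{\bf Q}_{\rm s})={\rm rank}(\hat{\bf Z}_{\rm b}{\bf Q}_{\rm s})\le{\rm rank}(\hat{\bf Z}_{\rm b})$. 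This reduces the whole theorem to bounding the rank of $\hat{\bf Z}_{\rm b}$.

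To bound ${\rm rank}(\hat{\bf Z}_{\rm b})\le{\rm rank}({\bf Z}_{\rm b})$, I would use the complementary slackness ${\bf Z}_{\rm b}\bar{\boldsymbol\Gamma}_{\rm b}={\bf 0}$, which forces ${\rm range}({\bf Z}_{\rm b})\subseteq{\rm null}(\bar{\boldsymbol\Gamma}_{\rm b})$ and hence ${\rm rank}({\bf Z}_{\rm b})\le (N_{\rm T}+1)-{\rm rank}(\bar{\boldsymbol\Gamma}_{\rm b})$. Thus if the optimal $\bar{\boldsymbol\Gamma}_{\rm b}$ has nullity at most one, i.e. ${\rm rank}(\bar{\boldsymbol\Gamma}_{\rm b})\ge N_{\rm T}$, then ${\rm rank}(\hat{\bf Z}_{\rm b})\le1$ and therefore ${\rm rank}({\bf Q}_{\rm s})\le1$. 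Since $\gamma_{\rm b}>0$ renders ${\bf Q}_{\rm s}={\bf 0}$ infeasible for \eqref{mxP2_c0}, we have ${\bf Q}_{\rm s}\neq{\bf 0}$, and the bound sharpens to ${\rm rank}({\bf Q}_{\rm s})=1$, giving the claim.

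The main obstacle is establishing ${\rm rank}(\bar{\boldsymbol\Gamma}_{\rm b})\ge N_{\rm T}$, equivalently that ${\bf Z}_{\rm b}$ is at most rank-one. The plan here is to first show that the $\mathcal S$-procedure multiplier satisfies $\mu_{\rm b}>0$ at the optimum --- which is expected because the worst-case error is attained on the boundary $\|{\boldsymbol\delta}_{\rm b}\|=\varepsilon_{\rm b}$, making the implication in Lemma \ref{lma1} tight --- and then to inspect the leading $N_{\rm T}\times N_{\rm T}$ principal block $\mu_{\rm b}{\bf I}_{N_{\rm T}}+{\bf Q}_{\rm s}-\gamma_{\rm b}{\bf Q}_{\rm n}$ of $\bar{\boldsymbol\Gamma}_{\rm b}$: whenever this block is nonsingular, ${\rm rank}(\bar{\boldsymbol\Gamma}_{\rm b})\ge N_{\rm T}$ follows at once. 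This is the genuinely robustness-specific difficulty: under perfect CSI the term playing the role of $\hat{\bf Z}_{\rm b}$ is rank-one by construction (proportional to ${\bf h}_{\rm b}{\bf h}_{\rm b}^H$) and the conclusion is immediate, whereas here the rank-one property must be recovered from the block and nullity structure of the optimal LMI, with the degenerate case of a singular leading block handled separately.
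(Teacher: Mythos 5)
Your first two steps are, up to notation, exactly the paper's own proof: your ${\bf Z}$'s, ${\bf Y}_{\rm s}$, ${\bf A}$ and $\hat{\bf Z}_{\rm b}$ are the paper's ${\boldsymbol\Psi}$'s, $\boldsymbol\Phi_{\rm s}$, $\boldsymbol\Sigma$ and $\bar{\bf H}_{\rm b}{\boldsymbol\Psi}_{\rm b}\bar{\bf H}_{\rm b}^H$, and your identities ${\bf Y}_{\rm s}={\bf A}-\hat{\bf Z}_{\rm b}$, ${\bf A}{\bf Q}_{\rm s}=\hat{\bf Z}_{\rm b}{\bf Q}_{\rm s}$ and ${\rm rank}({\bf Q}_{\rm s})\le{\rm rank}(\hat{\bf Z}_{\rm b})$ are \eqref{kkt1_12}--\eqref{rank_Qs} (your stationarity is in fact the corrected form of \eqref{kkt1_12}, which in the paper drops ${\boldsymbol\Psi}_{\rm d}$, ${\boldsymbol\Psi}_{\rm r}$). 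The two arguments part company at the final step, bounding ${\rm rank}(\hat{\bf Z}_{\rm b})$, and that is where your proposal has a genuine hole.

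You reduce the theorem to the claim ${\rm rank}(\bar{\boldsymbol\Gamma}_{\rm b})\ge N_{\rm T}$ at optimality, to be deduced from $\mu_{\rm b}>0$ together with nonsingularity of the leading block $\mu_{\rm b}{\bf I}_{N_{\rm T}}+{\bf Q}_{\rm s}-\gamma_{\rm b}{\bf Q}_{\rm n}$, with the singular case ``handled separately'' --- but that case is never handled, and it is not a removable technicality in your scheme. First, $\mu_{\rm b}>0$ does not imply the block is nonsingular: Lemma~\ref{lemm_psd} certifies only $\mu_{\rm b}{\bf I}_{N_{\rm T}}+{\bf Q}_{\rm s}-\gamma_{\rm b}{\bf Q}_{\rm n}\succeq{\bf 0}$, and nothing forbids $-\mu_{\rm b}$ from being an eigenvalue of ${\bf Q}_{\rm s}-\gamma_{\rm b}{\bf Q}_{\rm n}$. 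Second, when the block is singular with nullity $d$, each of its null vectors ${\bf v}$ lifts to a null vector $[{\bf v}^T\,\,0]^T$ of $\bar{\boldsymbol\Gamma}_{\rm b}$ (for ${\bf M}\succeq{\bf 0}$, ${\bf x}^H{\bf M}{\bf x}=0$ forces ${\bf M}{\bf x}={\bf 0}$), so ${\rm nullity}(\bar{\boldsymbol\Gamma}_{\rm b})\ge d$ and your chain certifies only ${\rm rank}({\bf Q}_{\rm s})\le{\rm nullity}(\bar{\boldsymbol\Gamma}_{\rm b})$, which is useless precisely when $d\ge2$; your bound collapses exactly in the case you set aside. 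The paper never counts nullity. Instead it substitutes \eqref{simpli_lmis1} into the slackness condition \eqref{kkt1_2}, post-multiplies by $\bar{\bf H}_{\rm b}^H$, pre-multiplies by $[{\bf I}_{N_{\rm T}}\,\,{\bf 0}]$, and uses $[{\bf I}_{N_{\rm T}}\,\,{\bf 0}]{\boldsymbol\Lambda}_{\rm b}=\mu_{\rm b}\left(\bar{\bf H}_{\rm b}-[{\bf 0}_{N_{\rm T}}\,\,\hat{\bf h}_{\rm b}]\right)$ to obtain the identity \eqref{kkt2_4}, $\left(\mu_{\rm b}{\bf I}_{N_{\rm T}}+{\bf Q}_{\rm s}-\gamma_{\rm b}{\bf Q}_{\rm n}\right)\bar{\bf H}_{\rm b}{\boldsymbol\Psi}_{\rm b}\bar{\bf H}_{\rm b}^H=\mu_{\rm b}[{\bf 0}_{N_{\rm T}}\,\,\hat{\bf h}_{\rm b}]{\boldsymbol\Psi}_{\rm b}\bar{\bf H}_{\rm b}^H$, whose right-hand side is manifestly of rank at most one because $[{\bf 0}_{N_{\rm T}}\,\,\hat{\bf h}_{\rm b}]$ is. That explicit rank-one factorization of the slackness condition is the idea your proposal is missing, and it is what yields ${\rm rank}(\hat{\bf Z}_{\rm b})\le1$ in \eqref{rank_ineq}. (You did locate the crux correctly: the paper's first equality in \eqref{rank_ineq} also tacitly treats the leading block as invertible while Lemma~\ref{lemm_psd} gives only PSD-ness; but identifying the difficulty and deferring it is not a proof.) Two smaller slips: Slater's condition cannot be argued by taking the multipliers ``arbitrarily large,'' since enlarging $\mu_{\rm b}$ drives the $(2,2)$ entry of \eqref{sinrB_lmi} downward through $-\mu_{\rm b}\varepsilon_{\rm b}^2$; and $\mu_{\rm b}>0$ needs no boundary-attainment heuristic --- testing \eqref{sinrB_lmi} with $[\hat{\bf h}_{\rm b}^T\,\,{-1}]^T$ shows that $\mu_{\rm b}=0$ would give the quadratic form the negative value $-\gamma_{\rm b}\left(p_{\rm d}\tilde g_{\rm bd}+p_{\rm r}\tilde g_{\rm br}+\sigma_{\rm b}^2\right)$, contradicting \eqref{mxP3_c1}.
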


\begin{proof}
See Appendix~\ref{proof_rank_thm}.
\end{proof}

\section{Simulation Results}\label{sec_sim}
In this section, we evaluate the secrecy performance of the proposed robust transmission design through nummerical simulations. For the case of imperfect CSI, the error vectors were uniformly and randomly generated in a sphere centered at zero with the radius $\varepsilon_{\rm b} = \varepsilon_{\rm d} = \varepsilon_{\rm r} = \varepsilon_{{\rm e},k} = \varepsilon = 0.1$. For simplicity, we assume $\gamma_{{\rm e},k} = \gamma_{{\rm e}} , \forall k, \gamma_{{\rm d}} = \gamma_{{\rm r}} = \gamma_{{\rm D2D}}$ unless explicitly mentioned. As a benchmark, we compare the results with perfect CSI cases. All simulation results were averaged over $1000$ random channel realizations.

\begin{figure}[ht]
\centering
\includegraphics*[width=\columnwidth]{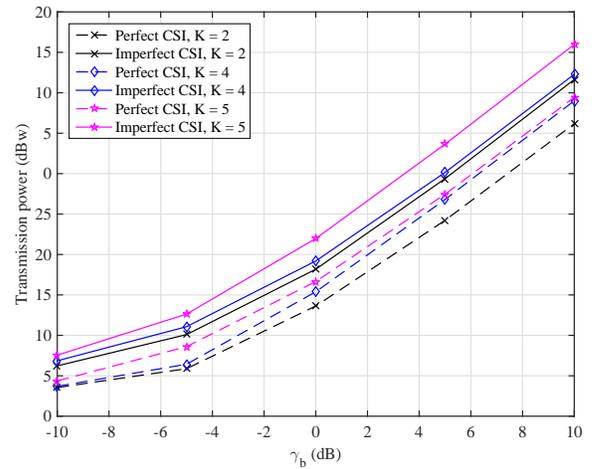}
\caption{Total transmit power $P_{\rm T}$ (dBw) versus Bob's SINR threshold $\gamma_{\rm b}$ with $N_{\rm T} = 10$, $K = 4$, $\gamma_{\rm e} = -5$ (dB), and $\gamma_{{\rm D2D}} = -5$ (dB).}\label{fig_Pt_Gm_b}
\end{figure}

Fig.~\ref{fig_Pt_Gm_b} studies the relation between transmit power and Bob's receive SINR threshold $\gamma_{{\rm b}} $. As expected, the transmit power is monotonically increasing with $\gamma_{\rm b}$. In addition, if the number of eavesdroppers $K$ increases, the total transmit power will increase accordingly to tackle the increased decodability of Eves. It is no surprise that the proposed robust design with imperfect CSI requires higher power compared to the perfect CSI counterpart since the robust algorithm requires to satisfy infinitely many constraints.

\begin{figure}[ht]
\centering
\includegraphics*[width=\columnwidth]{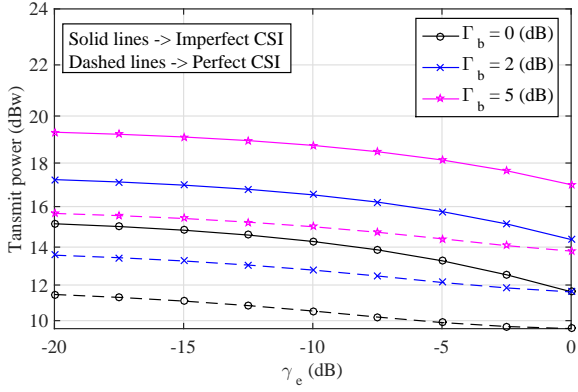}
\caption{Total transmit power $P_{\rm T}$ (dBm) versus Eves' SINR threshold $\gamma_{\rm e}$ with $N_{\rm T} = 10$, $K = 4$, $\gamma_{\rm b} = 10$ (dB), and $\gamma_{{\rm D2D}} = -5$ (dB).}\label{fig_Pt_Gme}
\end{figure}

In the next example, we effects of eavesdropping constraints on BS transmit power. Fig.~\ref{fig_Pt_Gme} shows the transmit power versus Eves' SINR requirement $\gamma_{\rm e}$ with $N_{\rm T} = 10$, $K = 4$, $\gamma_{\rm b} = 10$ (dB), and $\gamma_{{\rm D2D}} = -5$ (dB) for both perfect and imperfect CSI cases. The results in Fig.~\ref{fig_Pt_Gme} show the general trend that with relaxed eavesdropping constraint, smaller transmit power is required. However, increasing $\gamma_{\rm b}$ requires higher transmit power as we observed in Fig.~\ref{fig_Pt_Gm_b} as well.



\begin{figure}[ht]
\centering
\includegraphics*[width=\columnwidth]{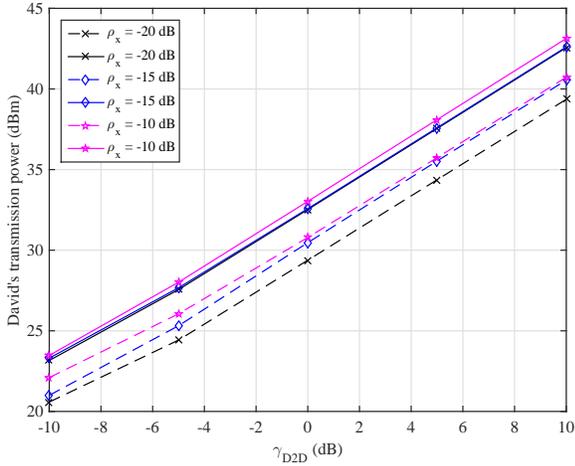}
\caption{David's transmit power $p_{\rm d}$ (dBm) versus SINR threshold of the D2D link $\gamma_{{\rm D2D}}$ with $N_{\rm T} = 10$, $K = 4$, $\gamma_{\rm b} = 10$ (dB), and $\gamma_{\rm e} = -5$ (dB).}\label{fig_region}
\end{figure}

Finally, Fig.~\ref{fig_region} shows the impact of David's transmit power $p_{\rm d}$ on the D2D link link requirement for various residual SI with $N_{\rm T} = 10$, $K = 4$, $\gamma_{\rm b} = 10$ (dB), and $\gamma_{\rm e} = -5$ (dB). It is evident from Fig.~\ref{fig_region} that increased residual SI requires higher power to satisfy the D2D SINR requirement.

\section{Conclusions}\label{sec_con}
This paper studied optimal beamforming and power allocation strategies for secure communication with full-duplex D2D link. We solved the worst-case robust beamforming problem optimally through the discovery of the hidden convexity in the problem. Exploiting FD transmission at all nodes (including BS and Bob) could be an interesting future work.

\appendix
\subsection{Proof of Theorem~\ref{thm_rank}}\label{proof_rank_thm}
Here we use the Karush-Kuhn-Tucker (KKT) conditions of problem \eqref{mxP3} to prove the existence of a rank-one optimal ${\bf Q}_{\rm s}$. Let us now define ${\boldsymbol\Psi}_{\rm s}\succeq{\bf 0}, {\boldsymbol\Psi}_{\rm d}\succeq{\bf 0}, {\boldsymbol\Psi}_{\rm r}\succeq{\bf 0}, {\boldsymbol\Psi}_{{\rm e},k}\succeq{\bf 0}, \forall k, $ as the dual variables associated with the constraints \eqref{mxP3_c1}--\eqref{mxP3_c4}, respectively, whereas $\boldsymbol\Phi_{\rm s} \succeq {\bf 0}$ and $\boldsymbol\Phi_{\rm n} \succeq {\bf 0}$ associated with ${\bf Q}_{\rm s}$ and ${\bf Q}_{\rm n}$, respectively. As such, the Lagrangian of  \eqref{mxP3} can be expressed as
\begin{multline}\label{lagg1}
\mathcal L\triangleq {\rm tr}\left({\bf Q}_{\rm s} + {\bf Q}_{\rm n}\right) - {\rm tr}\left({\boldsymbol\Psi}_{\rm b}\bar{\boldsymbol\Gamma}_{\rm b}\left({\bf Q}_{\rm s},{\bf Q}_{\rm n},\mu_{\rm b}\right)\right)\\
- {\rm tr}\left({\boldsymbol\Psi}_{\rm d}\bar{\boldsymbol\Gamma}_{\rm d}\left({\bf Q}_{\rm s},{\bf Q}_{\rm n},\mu_{\rm d}\right)\right)
 - {\rm tr}\left({\boldsymbol\Psi}_{\rm r}\bar{\boldsymbol\Gamma}_{\rm r}\left({\bf Q}_{\rm s},{\bf Q}_{\rm n},\mu_{\rm r}\right)\right)\\
- \sum_{k=1}^K {\rm tr}\left({\boldsymbol\Psi}_{{\rm e},k}\bar{\boldsymbol\Gamma}_{{\rm e},k}\left({\bf Q}_{\rm s},{\bf Q}_{\rm n},\mu_{{\rm e},k}\right)\right) - \boldsymbol\Phi_{\rm s}{\bf Q}_{\rm s} - \boldsymbol\Phi_{\rm n}{\bf Q}_{\rm n} + \boldsymbol\Omega,
\end{multline}
where $\boldsymbol\Omega$ includes all the remaining terms not involving ${\bf Q}_{\rm s}$. For ease of exposition, let us now rewrite $\bar{\boldsymbol\Gamma}_{\rm b}$, $\bar{\boldsymbol\Gamma}_{\rm d}$, $\bar{\boldsymbol\Gamma}_{\rm r}$, and $\bar{\boldsymbol\Gamma}_{{\rm e},k}$ as
\begin{subequations}\label{simpli_lmis}
\begin{align}
\bar{\boldsymbol\Gamma}_{\rm b} & ={\boldsymbol\Lambda}_{\rm b}\left(\mu_{\rm b}\right) + \bar{\bf H}_{\rm b}^H\left({\bf Q}_{\rm s} - \gamma_{\rm b}{\bf Q}_{\rm n}\right)\bar{\bf H}_{\rm b},\label{simpli_lmis1}\\
\bar{\boldsymbol\Gamma}_{\rm d} & ={\boldsymbol\Lambda}_{\rm d}\left(\mu_{\rm d}\right) - \gamma_{\rm d}\bar{\bf H}_{\rm d}^H\left({\bf Q}_{\rm s} + {\bf Q}_{\rm n}\right)\bar{\bf H}_{\rm d},\label{simpli_lmis2}\\
\bar{\boldsymbol\Gamma}_{\rm r} & ={\boldsymbol\Lambda}_{\rm r}\left(\mu_{\rm r}\right) - \gamma_{\rm r}\bar{\bf H}_{\rm r}^H\left({\bf Q}_{\rm s} + {\bf Q}_{\rm n}\right)\bar{\bf H}_{\rm r},\label{simpli_lmis3}\\
\bar{\boldsymbol\Gamma}_{{\rm e},k}&={\boldsymbol\Lambda}_{{\rm e},k}\left(\mu_{{\rm e},k}\right) + \bar{\bf H}^H_{{\rm e},k}\left(\gamma_{{\rm e},k}{\bf Q}_{\rm n} - {\bf Q}_{\rm s}\right) \bar{\bf H}_{{\rm e},k},
\end{align}
\end{subequations}
where 
\begin{align}
&{\boldsymbol\Lambda}_{\rm b}\left(\mu_{\rm b}\right) \triangleq \left[\begin{array}{cc}\mu_{\rm b}{\bf I}_{N_{\rm T}}&{\bf 0}\\
{\bf 0} & - \gamma_{\rm b}\left(p_{\rm d}\tilde g_{\rm bd} + p_{\rm r}\tilde g_{\rm br} + \sigma_{\rm b}^2\right) - \mu_{\rm b}\varepsilon_{\rm b}^2\end{array}\right],\nonumber\\
&\bar{\bf H}_{\rm b} \triangleq \left[{\bf I}_{N_{\rm T}}\,\,\,\hat{\bf h}_{\rm b}\right], ~~ \bar{\bf H}_{{\rm e},k} \triangleq \left[{\bf I}_{N_{\rm T}} ~~ \hat{{\bf h}}_{{\rm e},k}\right],\nonumber\\
&{\boldsymbol\Lambda}_{\rm d}\left(\mu_{\rm d}\right) \triangleq \left[\begin{array}{cc}\mu_{\rm d}{\bf I}_{N_{\rm T}}&{\bf 0}\\
{\bf 0} & p_{\rm r}\tilde g_{\rm dr} - \gamma_{\rm d}\left(\rho_{\rm d} + \sigma_{\rm d}^2\right) - \mu_{\rm d}\varepsilon_{\rm d}^2\end{array}\right],\nonumber\\
&\bar{\bf H}_{\rm d} \triangleq \left[{\bf I}_{N_{\rm T}}\,\,\,\hat{\bf h}_{\rm d}\right], ~~ \bar{\bf H}_{{\rm r}} \triangleq \left[{\bf I}_{N_{\rm T}} ~~ \hat{{\bf h}}_{{\rm r}}\right],\nonumber\\
&{\boldsymbol\Lambda}_{\rm r}\left(\mu_{\rm r}\right) \triangleq \left[\begin{array}{cc}\mu_{\rm r}{\bf I}_{N_{\rm T}}&{\bf 0}\\
{\bf 0} & p_{\rm d}\tilde g_{\rm dr} - \gamma_{\rm r}\left(\rho_{\rm r} + \sigma_{\rm r}^2\right) - \mu_{\rm r}\varepsilon_{\rm r}^2\end{array}\right],\nonumber\\
& {\boldsymbol\Lambda}_{{\rm e},k}\left(\mu_{{\rm e},k}\right) \triangleq \left[\begin{array}{cc}\mu_{{\rm e},k}{\bf I}_{N_{\rm T}} & {\bf 0}\\
{\bf 0} & c_{{\rm e},k}\end{array}\right].\nonumber
\end{align}
The relevant KKT conditions can be defined as
\begin{subequations}\label{kkt1}
\begin{eqnarray}
\nabla_{{\bf Q}_{\rm s}}\mathcal L \!\!\!&=&\!\!\! {\bf 0},\label{kkt1_1}\\
\bar{\boldsymbol\Gamma}_{\rm b}\left({\bf Q}_{\rm s},{\bf Q}_{\rm n},\mu_{\rm b}\right){\boldsymbol\Psi}_{\rm b} \!\!\!&=&\!\!\! {\bf 0},\label{kkt1_2}\\
\bar{\boldsymbol\Gamma}_{\rm d}\left({\bf Q}_{\rm s},{\bf Q}_{\rm n},\mu_{\rm d}\right){\boldsymbol\Psi}_{\rm d} \!\!\!&=&\!\!\! {\bf 0},\label{kkt1_3}\\
\bar{\boldsymbol\Gamma}_{\rm r}\left({\bf Q}_{\rm s},{\bf Q}_{\rm n},\mu_{\rm r}\right){\boldsymbol\Psi}_{\rm r} \!\!\!&=&\!\!\! {\bf 0},\label{kkt1_4}\\
\boldsymbol\Phi_{\rm s}{\bf Q}_{\rm s} \!\!\!&=&\!\!\! {\bf 0},\label{kkt1_5}\\
{\boldsymbol\Psi}_{\rm s} \succeq {\bf 0}, {\boldsymbol\Psi}_{\rm d} \succeq {\bf 0}, {\boldsymbol\Psi}_{\rm r} \succeq {\bf 0}, {\boldsymbol\Psi}_{{\rm e},k} \succeq {\bf 0}, \forall k. \label{kkt1_6}
\end{eqnarray}
\end{subequations}
Using \eqref{simpli_lmis}, the KKT condition \eqref{kkt1_1} can be expressed as
\begin{multline}
{\bf I}_{N_{\rm T}} - \bar{\bf H}_{\rm b}{\boldsymbol\Psi}_{\rm b}\bar{\bf H}_{\rm b}^H + \gamma_{\rm d}\bar{\bf H}_{\rm d}\bar{\bf H}_{\rm d}^H + \gamma_{\rm r}\bar{\bf H}_{\rm r}\bar{\bf H}_{\rm r}^H + \sum_{k=1}^K\bar{\bf H}_{{\rm e},k}{\boldsymbol\Psi}_{{\rm e},k}\bar{\bf H}^H_{{\rm e},k}\\
= {\boldsymbol\Psi}_{\rm s}.\label{kkt1_12}
\end{multline}
Let $\boldsymbol\Sigma$ denote the positive-definite matrix
\begin{align}
\boldsymbol\Sigma &\triangleq  {\bf I}_{N_{\rm T}} + \gamma_{\rm d}\bar{\bf H}_{\rm d}\bar{\bf H}_{\rm d}^H + \gamma_{\rm r}\bar{\bf H}_{\rm r}\bar{\bf H}_{\rm r}^H + \sum_{k=1}^K\bar{\bf H}_{{\rm e},k}{\boldsymbol\Psi}_{{\rm e},k}\bar{\bf H}^H_{{\rm e},k}.\label{srcBb}
\end{align}
Multiplying both sides of \eqref{kkt1_12} by ${\bf Q}_{\rm s}$ and using KKT condition \eqref{kkt1_5}, we have
\begin{align}
\boldsymbol\Sigma {\bf Q}_{\rm s} = \bar{\bf H}_{\rm b}{\boldsymbol\Psi}_{\rm b}\bar{\bf H}_{\rm b}^H{\bf Q}_{\rm s}
\end{align}
Thus it is obvious that
\begin{align}
{\rm rank}\left({\bf Q}_{\rm s}\right) = {\rm rank}\left(\boldsymbol\Sigma{\bf Q}_{\rm s}\right) \le  {\rm rank}\left(\bar{\bf H}_{\rm b}{\boldsymbol\Psi}_{\rm b}\bar{\bf H}_{\rm b}^H{\bf Q}_{\rm s}\right)\label{rank_Qs}
\end{align}
since pre(post)-multiplying any matrix by a positive-definite matrix does not change its rank. Next, we show that ${\rm rank}(\bar{\bf H}_{\rm b}{\boldsymbol\Psi}_{\rm b}\bar{\bf H}_{\rm b}^H)\leq 1$. Substituting \eqref{simpli_lmis1} into the KKT condition \eqref{kkt1_2}, we obtain
\begin{equation}\label{kkt2_1}
{\boldsymbol\Lambda}_{\rm b}\left(\mu_{\rm b}\right){\boldsymbol\Psi}_{\rm b} + \bar{\bf H}_{\rm b}^H\left({\bf Q}_{\rm s} - \gamma{\bf Q}_{\rm n}\right)\bar{\bf H}_{\rm b}{\boldsymbol\Psi}_{\rm b} = {\bf 0}.
\end{equation}
Post-multiplying \eqref{kkt2_1} by $\bar{\bf H}_{\rm b}^H$ yields
\begin{equation}
{\boldsymbol\Lambda}_{\rm b}\left(\mu_{\rm b}\right){\boldsymbol\Psi}_{\rm b}\bar{\bf H}_{\rm b}^H + \bar{\bf H}_{\rm b}^H\left({\bf Q}_{\rm s} - \gamma{\bf Q}_{\rm n}\right)\bar{\bf H}_{\rm b}{\boldsymbol\Psi}_{\rm b}\bar{\bf H}_{\rm b}^H = {\bf 0}.\label{kkt2_2}
\end{equation}

Now, the following facts can be easily verified:
\begin{align}
\left[{\bf I}_{N_{\rm T}}\,\, {\bf 0}\right]\bar{\bf H}_{\rm b}^H & = {\bf I}_{N_{\rm T}}\notag\\
\left[{\bf I}_{N_{\rm T}}\,\, {\bf 0}\right]{\boldsymbol\Lambda}_{\rm b}\left(\mu_{\rm b}\right) & = \mu_{\rm b}\left[{\bf I}_{N_{\rm T}}\,\, {\bf 0}\right] = \mu_{\rm b}\left(\bar{\bf H}_{\rm b} - \left[{\bf 0}_{N_{\rm T}} \,\, \hat{\bf h}_{\rm b}\right]\right).\notag
\end{align}
Premultiplying both sides of \eqref{kkt2_2} by $\left[{\bf I}_{N_{\rm T}}\,\, {\bf 0}\right]$ we get
\begin{equation}\label{kkt2_4}
\left(\mu_{\rm b}{\bf I}_{N_{\rm T}} + {\bf Q}_{\rm s} - \gamma_{\rm b}{\bf Q}_{\rm n}\right)\bar{\bf H}_{\rm b}{\boldsymbol\Psi}_{\rm b}\bar{\bf H}_{\rm b}^H = \mu_{\rm b}\left[{\bf 0}_{N_{\rm T}} \,\, \hat{\bf h}_{\rm b}\right]{\boldsymbol\Psi}_{\rm b}\bar{\bf H}_{\rm b}^H.
\end{equation}

\begin{lemma}\label{lemm_psd}
If a hermitian matrix ${\bf M} = \left[\begin{array}{cc}{\bf M}_{11} & {\bf M}_{12}\\
{\bf M}_{21} & {\bf M}_{22}
\end{array}\right]\succeq {\bf 0},$ then it immediately follows that ${\bf M}_{11}$ and ${\bf M}_{22}$ must be PSD matrices \cite{mat_ana}.
\end{lemma}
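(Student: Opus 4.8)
The plan is to apply the definition of positive semidefiniteness directly, evaluating the quadratic form ${\bf v}^H{\bf M}{\bf v}$ on test vectors that are supported on only one of the two blocks. Recall that a Hermitian matrix ${\bf M}$ is PSD if and only if ${\bf v}^H{\bf M}{\bf v}\ge 0$ for every conformable vector ${\bf v}$; the Hermitian structure guarantees that each such quadratic form is real, so the inequalities are well posed and the diagonal blocks ${\bf M}_{11}$, ${\bf M}_{22}$ are themselves Hermitian.

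First I would establish ${\bf M}_{11}\succeq{\bf 0}$. Given an arbitrary vector ${\bf x}$ of the dimension of the first block, set ${\bf v}=\left[{\bf x}^H~{\bf 0}^H\right]^H$ and substitute into the partitioned form of ${\bf M}$. Upon expanding ${\bf v}^H{\bf M}{\bf v}$, every term involving ${\bf M}_{12}$, ${\bf M}_{21}$, or ${\bf M}_{22}$ is multiplied by the zero sub-vector and therefore vanishes, leaving ${\bf v}^H{\bf M}{\bf v}={\bf x}^H{\bf M}_{11}{\bf x}$. Since ${\bf M}\succeq{\bf 0}$ forces the left-hand side to be nonnegative, we obtain ${\bf x}^H{\bf M}_{11}{\bf x}\ge 0$ for all ${\bf x}$, i.e.\ ${\bf M}_{11}\succeq{\bf 0}$.

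The claim for ${\bf M}_{22}$ then follows by the mirror-image argument: taking ${\bf v}=\left[{\bf 0}^H~{\bf y}^H\right]^H$, with ${\bf y}$ ranging over the dimension of the second block, collapses the quadratic form to ${\bf y}^H{\bf M}_{22}{\bf y}\ge 0$, whence ${\bf M}_{22}\succeq{\bf 0}$.

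There is essentially no obstacle here: the result is simply the standard fact that any principal submatrix of a PSD matrix is again PSD, recovered by restricting the quadratic form to a coordinate subspace. The only point requiring a little care is the block-expansion bookkeeping that confirms the off-diagonal contributions cancel; once that is verified the two conclusions are immediate, which is precisely why the statement is flagged as following ``immediately'' from ${\bf M}\succeq{\bf 0}$.
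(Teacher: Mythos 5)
Your proof is correct and complete: restricting the quadratic form ${\bf v}^H{\bf M}{\bf v}\ge 0$ to vectors supported on a single block is exactly the standard argument showing that principal submatrices of a PSD matrix are PSD. The paper itself offers no proof of this lemma---it is stated as a known fact with a citation to a matrix-analysis reference---and your argument is precisely the canonical proof that such a reference would give, so there is nothing to reconcile.
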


Applying {\it Lemma~\ref{lemm_psd}} to \eqref{sinrB_lmi}, we have that $\mu_{\rm b}{\bf I}_{N_{\rm T}} + {\bf Q}_{\rm s} - \gamma_{\rm b}{\bf Q}_{\rm n} \succeq {\bf 0}$. Then the following rank relation holds
\begin{eqnarray}
{\rm rank}\left(\bar{\bf H}_{\rm b}{\boldsymbol\Psi}_{\rm b}\bar{\bf H}_{\rm b}^H\right) \!\!\!\!\!&=&\!\!\!\!\! {\rm rank}\left(\left(\mu_{\rm b}{\bf I}_{N_{\rm T}} + {\bf Q}_{\rm s} - \gamma_{\rm b}{\bf Q}_{\rm n}\right)\!\bar{\bf H}_{\rm b}{\boldsymbol\Psi}_{\rm b}\bar{\bf H}_{\rm b}^H\right)\nonumber\\
\!\!\!&=&\!\!\! {\rm rank}\left(\mu_{\rm b}\left[{\bf 0}_{N_{\rm T}} \,\, \hat{\bf h}_{\rm b}\right]{\boldsymbol\Psi}_{\rm b}\bar{\bf H}_{\rm b}^H\right)\nonumber\\
\!\!\!&\leq&\!\!\! {\rm rank}\left(\left[{\bf 0}_{N_{\rm T}} \,\, \hat{\bf h}_{\rm b}\right]\right) \leq 1 \label{rank_ineq}
\end{eqnarray}
where \eqref{rank_ineq} follows from a basic rank inequality property \cite{mat_ana}. Note that ${\bf Q}_{\rm s} = {\bf 0}$ does not satisfy the constraint \eqref{mxP2_c0} hence can not be the optimal solution to problem \eqref{mxP3}. Thus combining \eqref{rank_Qs} and \eqref{rank_ineq}, we conclude that ${\rm rank}\left({\bf Q}_{\rm s}\right) = 1$ must hold.\hfill $\Box$

\bibliographystyle{IEEEtran}\footnotesize{
\bibliography{IEEEabrv,../../../../../../refdb}}

\begin{thebibliography}{10}
\providecommand{\url}[1]{#1}
\csname url@samestyle\endcsname
\providecommand{\newblock}{\relax}
\providecommand{\bibinfo}[2]{#2}
\providecommand{\BIBentrySTDinterwordspacing}{\spaceskip=0pt\relax}
\providecommand{\BIBentryALTinterwordstretchfactor}{4}
\providecommand{\BIBentryALTinterwordspacing}{\spaceskip=\fontdimen2\font plus
\BIBentryALTinterwordstretchfactor\fontdimen3\font minus
  \fontdimen4\font\relax}
\providecommand{\BIBforeignlanguage}[2]{{%
\expandafter\ifx\csname l@#1\endcsname\relax
\typeout{** WARNING: IEEEtran.bst: No hyphenation pattern has been}%
\typeout{** loaded for the language `#1'. Using the pattern for}%
\typeout{** the default language instead.}%
\else
\language=\csname l@#1\endcsname
\fi
#2}}
\providecommand{\BIBdecl}{\relax}
\BIBdecl

\bibitem{jrnl_alex_miso_fd}
A.~A. Okandeji, M.~R.~A. Khandaker, K.-K. Wong, G.~Zheng, Y.~Zhang, and
  Z.~Zheng, ``{SWIPT} in {MISO} full-duplex systems,'' \emph{IEEE/KICS J.
  Commun. Netw.}, to appear, 2017.

\bibitem{conf_fd_relay_gc}
A.~A. Okandeji, M.~R.~A. Khandaker, and K.-K. Wong, ``Joint transmit power and
  relay two-way beamforming optimization for energy-harvesting full-duplex
  communications,'' in \emph{Proc. IEEE GLOBECOM Workshop Full Duplex Wireless
  Communications}, 4-8 December 2016, Washington, DC, USA.

\bibitem{fd_antenna_can}
J.~I. Choi, M.~Jain, K.~Srinivasan, P.~Levis, and S.~Katti, ``Achieving single
  channel, full duplex wireless communication,'' in \emph{Proc. Sixteenth ACM
  MobiCom}, 2010.

\bibitem{conf_eusipco16}
A.~A. Okandeji, M.~R.~A. Khandaker, and K.-K. Wong, ``Joint beamforming
  optimization for {SWIPT} in full-duplex {MIMO} two-way relay channel,'' in
  \emph{Proc. European Signal Processing Conference (EUSIPCO)}, Budapest,
  Hungary, Aug. 29 - Sep. 02, 2016.

\bibitem{conf_alex_icc16}
------, ``Wireless information and power transfer in full-duplex communication
  systems,'' in \emph{Proc. IEEE ICC}, Kuala Lumpur, Malaysia, May 23-27, 2016.

\bibitem{fd_exp}
M.~Duarte, C.~Dick, and A.~Sabharwal, ``Experiment-driven characterization of
  full-duplex wireless systems,'' \emph{IEEE Trans. Wireless Commun.}, vol.~11,
  pp. 4296--4307, Dec. 2012.

\bibitem{fd_real_time}
M.~Jain, J.~I. Choi, T.~Kim, D.~Bharadia, S.~Seth, K.~Srinivasan, P.~Levis,
  S.~Katti, and P.~Sinha, ``Practical, real-time, full duplex wireless,'' in
  \emph{Proc. 17th ACM MobiCom}, 2011.

\bibitem{fd_1ant}
D.~Bharadia, E.~McMilin, and S.~Katti, ``Full duplex radios,'' \emph{SIGCOMM
  Comput. Commun. Rev.}, vol.~43, pp. 375--386, Aug. 2013.

\bibitem{d2d_ltea}
K.~Doppler, M.~Rinne, C.~Wijting, C.~B. Ribeiro, and K.~Hugl,
  ``Device-to-device communication as an underlay to {LTE}-advanced networks,''
  \emph{IEEE Commun. Mag.}, vol.~47, pp. 42--49, Dec. 2009.

\bibitem{d2d_survey}
J.~Liu, N.~Kato, J.~Ma, and N.~Kadowaki, ``Device-to-device communication in
  {LTE-Advanced} networks: {A} survey,'' \emph{IEEE Commun. Surveys \&
  Tutorials}, vol.~17, pp. 1923--1940, Fourthquarter 2015.

\bibitem{d2d_int_lim}
H.~Min, J.~Lee, S.~Park, and D.~Hong, ``Capacity enhancement using an
  interference limited area for device-to-device uplink underlaying cellular
  networks,'' \emph{IEEE Trans. Wireless Commun.}, vol.~10, pp. 3995--4000,
  Dec. 2011.

\bibitem{d2d_sec_rob}
J.~Wang, Q.~Tang, C.~Yang, R.~Schober, and J.~Li, ``Security enhancement via
  device-to-device communication in cellular networks,'' \emph{IEEE Signal
  Process. Lett.}, vol.~23, pp. 1622--1626, Nov. 2016.

\bibitem{d2d_sec_ss}
R.~Zhang, X.~Cheng, and L.~Yang, ``Cooperation via spectrum sharing for
  physical layer security in device-to-device communications underlaying
  cellular networks,'' \emph{IEEE Trans. Wireless Commun.}, vol.~15, pp.
  5651--5663, Aug. 2016.

\bibitem{conf_const_int_isit17}
M.~R.~A. Khandaker, C.~Masouros, and K.-K. Wong, ``Constructive interference
  based secure precoding,'' in \emph{Proc. IEEE ISIT}, Aachen, Germany, June 25
  to 30, 2017, pp. 2875--2879.

\bibitem{jrnl_2way_eh}
M.~R.~A. Khandaker, K.-K. Wong, and G.~Zheng, ``Truth-telling mechanism for
  two-way relay selection for secrecy communications with energy-harvesting
  revenue,'' \emph{IEEE Trans. Wireless Commun.}, vol.~16, pp. 3111 -- 3123,
  Mar. 2017.

\bibitem{d2d_sec_survey}
M.~Haus, M.~Waqas, A.~Y. Ding, Y.~Li, S.~Tarkoma, and J.~Ott, ``Security and
  privacy in device-to-device {(D2D)} communication: {A} review,'' \emph{IEEE
  Commun. Surveys \& Tutorials}, vol.~19, pp. 1054--1079, Secondquarter 2017.

\bibitem{goel_an}
S.~Goel and R.~Negi, ``Guaranteeing secrecy using artificial noise,''
  \emph{IEEE Trans. Wireless Commun.}, vol.~7, pp. 2180--2189, June 2008.

\bibitem{jrnl_secrecy}
M.~R.~A. Khandaker and K.-K. Wong, ``Masked beamforming in the presence of
  energy-harvesting eavesdroppers,'' \emph{IEEE Trans. Inf. Forensics and
  Security}, vol.~10, pp. 40--54, Jan. 2015.

\bibitem{qli_qos}
W.-C. Liao, T.-H. Chang, W.-K. Ma, and C.-Y. Chi, ``{QoS}-based transmit
  beamforming in the presence of eavesdroppers: {An} optimized
  artificial-noise-aided approach,'' \emph{IEEE Trans. Signal Process.},
  vol.~59, pp. 1202--1216, Mar. 2011.

\bibitem{jrnl_secrecy_sinr}
M.~R.~A. Khandaker and K.-K. Wong, ``Robust secrecy beamforming with
  energy-harvesting eavesdroppers,'' \emph{IEEE Wireless Commun. Letters},
  vol.~4, pp. 10--13, Feb. 2015.

\bibitem{fd_d2d}
S.~Ali, N.~Rajatheva, and M.~Latva-aho, ``Full duplex device-to-device
  communication in cellular networks,'' in \emph{Proc. European Conf. Netw.
  Commun. (EuCNC)}, Bologna, 2014.

\bibitem{mat_ana}
R.~A. Horn and C.~R. Johnson, \emph{Matrix Analysis}.\hskip 1em plus 0.5em
  minus 0.4em\relax Cambridge, {U. K.}: Cambridge Univ. Press, 1985.

\bibitem{jrnl_swipt}
M.~R.~A. Khandaker and K.-K. Wong, ``{SWIPT} in {MISO} multicasting systems,''
  \emph{IEEE Wireless Commun. Letters}, vol.~3, pp. 277--280, June 2014.

\bibitem{boyd}
S.~Boyd and L.~Vandenberghe, \emph{Convex Optimization}.\hskip 1em plus 0.5em
  minus 0.4em\relax Cambridge, U.~K.: Cambridge University Press, 2004.

\bibitem{cvx}
M.~Grant and S.~Boyd, ``{CVX}: Matlab software for disciplined convex
  programming (web page and software).'' \texttt{http://cvxr.com/cvx}, Apr.,
  2010.

\end{thebibliography}

\end{document}